\date{6 (19) May 2015}
 \author{Sean~Hill}
\author{Ekaterina~Shemyakova}
\author{Theodore~Voronov}
\address{Department of Mathematics, SUNY  New Paltz, New Paltz, NY 12561-2443, USA}
\email{shemyake@newpaltz.edu}
\address{School of Mathematics, University of Manchester, Manchester, M60 1QD, UK\\
{\hphantom{hh}Dept. of Quantum Field Theory, Tomsk State University, Tomsk, 634050, Russia}}
\email{theodore.voronov@manchester.ac.uk}
\title[Darboux transformations on the superline]{Darboux transformations for differential operators on the superline}
\newtheorem{theorem}{Theorem}
\newtheorem*{lemma}{Lemma}
\theoremstyle{definition}
\renewcommand{\leq}{\leqslant}
\DeclareMathOperator{\Ker}{Ker}
 \DeclareMathOperator{\ord}{ord}
\DeclareMathOperator{\DO}{DO}
\newcommand{\p}{\partial}
\newcommand{\G}{{\Gamma}}
\newcommand{\f}{{\varphi}}
\renewcommand{\l}{{\lambda}}
\newcommand{\x}{{\xi}}
\newcommand{\const}{\mathrm{const}}
\newcommand{\Mf}{M_{\f}}
\newcommand{\xto}[1]{{\xrightarrow{#1}}}
\begin{document}

\maketitle

In this note, we show that an arbitrary Darboux transformation of a differential operator  on the superline factorizes into elementary Darboux transformations of order one. (All definitions are given in the text.) Similar statement holds for operators on the ordinary line.  Elementary super Darboux transformations    and their iterations  for particular operators were  considered before~\cite{liu-qp-darboux-for-skdv-lmp-1995}, \cite{li-nimmo:darboux-twist2010}.  Our main result   is   the full description of Darboux transformations for arbitrary operators on  the superline.

By the superline we mean a $1|1$-dimensional supermanifold. Let $x$ be an even coordinate and $\x$ be an odd coordinate. Denote $D=\p_{\x}+\x\p_x$, so  $D^2=\p$, $\p=\p_x$. The ring of differential  operators  on the superline is denoted $\DO(1|1)$.
An arbitrary operator in $\DO(1|1)$  can be uniquely written as $A=a_mD^m+a_{m-1}D^{m-1}+\ldots+a_0$, where the coefficients are functions of $x,\x$ and may also depend on some `external' even or odd variables.   We define   \emph{order}   by saying that for an operator  $A$ as  above $\ord A\leq m$. This   differs from the usual notion; e.g., $\ord D=1$, but  $\ord \p=2$. An operator $A$ of order $m$ is \emph{nondegenerate} if the top coefficient $a_m$ is invertible (in particular, even). Examples: $\p=D^2$ is nondegenerate, $\p_{\x}=D-\x D^2$ is not. Nondegenerate operators of even order are even, and of odd order, odd. They cannot be divisors of zero;  the set of all nondegenerate operators is  multiplicatively closed.
If $A$ is nondegenerate, then $A=a_m\cdot B$, where $B=D^m+b_{m-1}D^{m-1}+\ldots +b_0$ is monic. Division with remainder (from the left and from the right) is possible: for arbitrary $N$ and nondegenerate
$M$, there  exist unique    $Q_1, R_1$ and  $Q_2, R_2$ such that $N=MQ_1+R_1$ and $N=Q_2M+R_2$, where $\ord R_1, \ord R_2< \ord M$.

It is easy to see that for a nondegenerate operator $A$ of order $m$, $\dim\Ker A=n|n$ if $m=2n$ and $\dim\Ker A=n+1|n$ if $m=2n+1$. For example,   $\dim \Ker D=1|0$ (constants). Indeed, one can re-write the equation $D^m\f+a_{n-1}D^{m-1}\f+\ldots + a_0\f=0$ in the matrix form $D\psi=\G\psi$, with an odd matrix $\G$, where the components of $\psi$ are $\f,D\f, \ldots,D^{m-1}\f$. A solution of a linear differential equation of the form $D\psi=\G\psi$ is completely defined by an initial condition for the vector $\psi^0(x)$, where $\psi(x,\x)=\psi^0(x)+\x\psi^1(x)$, which gives the dimension of the solution space. (Note that for a degenerate operator $A$,    $\dim\Ker A$ can be infinite. Example: $A=\p_{\x}$.)

We see that nondegenerate operators on the superline  are similar in many aspects with ordinary differential operators.

For an   invertible function $\f$, we define  the   operator $\Mf:=D-D\ln \f=\f\circ D\circ \f^{-1}$.  Then $\Mf\f=0$ and $\Ker\Mf$ is spanned by $\f$. Every monic first-order operator has this form.  
Let  $L$ be an arbitrary nondegenerate operator of order $m$. Let $\f$ be   an  even   solution  of the  equation $L\f=0$.
In the sequel we shall   act  formally and assume that it is possible to divide by $\f$. A kind of ``B\'{e}zout's theorem'' holds:   $L$ is divisible by $\Mf$, so that $L=L'\Mf$, for a nondegenerate operator $L'$ of order $m-1$. (Indeed, $L=L'\Mf + \psi$ for some function $\psi$, hence $0=L\f=\psi\f$, so $\psi=0$.) By induction, there is a factorization of $L$ into first-order operators, $L=a\cdot M_{\f_1}\cdot\ldots\cdot M_{\f_m}$.

Consider nondegenerate operators of order $m$ with the some fixed principal symbol (i.e., the top coefficient $a_m$). For simplicity let $a_m=1$; the general case is similar. For two such operators $L_0$ and $L_1$,   a differential operator $M$ of an arbitrary order $r$ defines a \emph{Darboux transformation} $L_0\to L_1$ if the \emph{intertwining relation} $ML_0=L_1M$ holds. (There is a more general notion based on   intertwining relations of the form $NL_0=L_1M$ with possibly different $N$ and $M$, but we do not consider it here.) By definition, the order of $M$ is the  \emph{order}   of the Darboux transformation. The operator $M$ is automatically nondegenerate (in the considered case, monic). Note that $L_1$, if exists, is defined uniquely by $L_0$ and $M$. If no confusion is possible, we write $L_0\xto{M} L_1$. The problem is, for a given $L_0$, to find all $M$ defining its Darboux transformations. Darboux transformations can be composed and form a category: if $L_0\xto{M_{10}} L_1$ and $L_1\xto{M_{21}} L_2$, then $L_0\xto{M_{20}} L_2$ where $M_{20}:=M_{21}M_{10}$.


\begin{lemma}
 Every first-order Darboux transformation $L_0\to L_1$ is given by an operator $M=\Mf$, where $\f$ is an even eigenfunction of  $L_0$ with some 
 eigenvalue $\l$. 
\end{lemma}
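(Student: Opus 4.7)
The plan is to combine the paper's explicit description of monic first-order operators with the kernel identification for $M_\varphi$. Any monic first-order operator $M = D + c$ is nondegenerate of order one, so by the dimension count stated earlier $\dim\Ker M = 1|0$; hence $\Ker M$ is spanned by a single even function $\varphi$, which may be taken to be invertible in the paper's formal setting. Imposing $M\varphi = 0$ determines $c = -D\ln\varphi$, so $M = D - D\ln\varphi = M_\varphi$.

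Next, apply the intertwining relation $ML_0 = L_1 M$ to $\varphi$: since $M\varphi = 0$,
\[
M(L_0\varphi) = L_1(M\varphi) = 0,
\]
so $L_0\varphi \in \Ker M$. It remains to verify that $\Ker M$ consists solely of scalar multiples of $\varphi$. Writing an arbitrary function as $f\varphi$ (legitimate because $\varphi$ is invertible and even) and using the conjugation identity $M_\varphi = \varphi \circ D \circ \varphi^{-1}$, one computes $M_\varphi(f\varphi) = \varphi\cdot Df$, which vanishes iff $Df = 0$, i.e., iff $f$ is a constant. Therefore $L_0\varphi = \lambda\varphi$ for some scalar $\lambda$, exhibiting $\varphi$ as the required even eigenfunction of $L_0$.

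The only delicate step is the parity bookkeeping establishing $M = M_\varphi$, which is forced by the $1|0$-structure of $\Ker M$; everything afterwards is a one-line consequence of the intertwining relation. One should also notice that the argument makes essential use of the text's conventions that $L_0,L_1,M$ are monic and that division by an even invertible solution is admissible, but these are all part of the working setup.
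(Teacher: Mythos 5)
Your proof is correct, but it reaches the conclusion by a somewhat different mechanism than the paper. The paper's proof divides $L_0$ by $M_{\varphi}$ from the right, $L_0=QM_{\varphi}+f$, substitutes into the intertwining relation and applies both sides to $\varphi$ to force $M_{\varphi}(f\varphi)=0$ and hence $f=\lambda=\const$; this at once yields the eigenvalue equation, the converse implication (an eigenfunction always produces a Darboux transformation, with $L_1=M_{\varphi}Q+\lambda$), and the ``exchange of factors in an incomplete factorization'' picture that the paper uses to define elementary transformations. You instead apply the intertwining relation directly to $\varphi$ to get $L_0\varphi\in\Ker M$, and then use the explicit computation $M_{\varphi}(f\varphi)=\varphi\, Df$ to see that $\Ker M$ is exactly the constant multiples of $\varphi$; this is the kernel-invariance argument that the paper saves for the proof of the Theorem, and it is arguably more direct for the literal statement of the Lemma. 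You are also more careful than the text about why $M=M_{\varphi}$ in the first place: the paper simply asserts beforehand that every monic first-order operator has this form, while you derive it from the $1|0$ dimension count for $\Ker M$ (modulo the standing formal assumption that the spanning even function is invertible). The only thing your version does not deliver is the converse direction and the explicit formula for $L_1$, which are not required by the statement but are used later in the paper; if you intend your Lemma to feed into the Theorem as written, it is worth recording that $L_0=QM_{\varphi}+\lambda$ implies $M_{\varphi}L_0=(M_{\varphi}Q+\lambda)M_{\varphi}$, i.e., that the eigenfunction condition is also sufficient.
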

\begin{proof}
 Let $\Mf L_0=L_1\Mf$ for some   $\Mf$. Divide $L_0$  by $\Mf$ from the right, so that $L_0=Q\Mf + f$ with some function $f$. Hence $\Mf\circ (Q\Mf + f)=L_1\Mf$. By applying both sides to $\f$, we obtain $\Mf(f\f)=0$; hence $f=\l=\const$. (Note that $\l$ may be even or odd depending on $m$.) Therefore $L_0\f=\l\f$. Conversely, if $L_0\f=\l\f$, we similarly deduce that $L_0=Q\Mf + \l$. Hence $\Mf L_0=L_1\Mf$   for $L_1:=\Mf Q + \l$\,.
\end{proof}

First-order Darboux transformations defined by the operators $M=\Mf$ as above are called \emph{elementary} Darboux transformations on the superline. (They are analogous to `Levy' or `Wro\'{n}ski' transformations on the line given by $M=\p-\p\ln\f$.) We have observed that all such Darboux transformations are given by `changing order in an incomplete factorization', $L_0=Q\Mf + \l \ \to\   L_1=\Mf Q + \l$\,.

\begin{theorem}\label{main}
 Every Darboux transformation $L_0\xto{M} L_r$ of order $r$ is the composition of $r$   elementary first-order transformations: $L_0\xto{M_{\f_1}} L_1\xto{M_{\f_2}} L_2\xto{M_{\f_3}} \ldots \xto{M_{\f_r}} L_r$.
\end{theorem}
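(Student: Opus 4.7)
The plan is to induct on the order $r$ of the intertwining operator $M$. The base case $r=1$ is exactly the Lemma above, and $r=0$ forces $M$ to be a nonzero constant, in which case there is nothing to prove. So assume the theorem for order $r-1$, and consider a relation $ML_0=L_rM$ with $\ord M=r\geq 2$ and $M$ monic.

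The argument hinges on producing an even function $\f_1$ which simultaneously lies in $\Ker M$ and is an eigenfunction of $L_0$, i.e., $L_0\f_1=\l_1\f_1$ for some scalar $\l_1$ (of parity matching $m=\ord L_0$). Granted such a $\f_1$, everything else follows from tools already at hand. The B\'ezout-type divisibility in the excerpt turns $M\f_1=0$ into a right factorization $M=M'\cdot M_{\f_1}$ with $M'$ monic of order $r-1$. The Lemma promotes the eigenvalue equation to an elementary first-order Darboux transformation $L_0\xto{M_{\f_1}}L_1$, i.e., $M_{\f_1}L_0=L_1M_{\f_1}$. Substituting this into the intertwining relation, $M'M_{\f_1}L_0=L_rM'M_{\f_1}$ becomes $(M'L_1-L_rM')M_{\f_1}=0$, and since $M_{\f_1}$ is nondegenerate, hence not a zero divisor, we conclude $M'L_1=L_rM'$. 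Thus $M'$ is itself a Darboux transformation $L_1\to L_r$ of order $r-1$, and the inductive hypothesis factorizes it as $M'=M_{\f_r}\cdots M_{\f_2}$ along a chain $L_1\to L_2\to\ldots\to L_r$, so $M=M_{\f_r}\cdots M_{\f_1}$ and the required sequence $L_0\xto{M_{\f_1}}L_1\xto{M_{\f_2}}\ldots\xto{M_{\f_r}}L_r$ is assembled.

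It remains to produce $\f_1$. The intertwining makes $\Ker M$ invariant under $L_0$: if $M\psi=0$, then $M(L_0\psi)=L_rM\psi=0$. The space $\Ker M$ is finite-dimensional by the discussion preceding the theorem, with super-dimension $(n|n)$ for $r=2n$ and $(n+1|n)$ for $r=2n+1$, and in particular has nonzero even part for every $r\geq 1$. One then seeks inside this finite-dimensional super vector space an even eigenvector of the restriction $L_0|_{\Ker M}$ whose eigenvalue $\l_1$ is a scalar of parity matching $m$. When $m$ is even this is the standard existence of an eigenvalue for an even endomorphism of a finite-dimensional vector space; when $m$ is odd, $L_0$ restricts to an odd endomorphism and an even eigenvector with odd eigenvalue is extracted from a two-dimensional $L_0$-invariant subspace by a standard superalgebra argument, if necessary after a formal extension of the ring of scalars by an odd constant.

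The main obstacle is precisely this last step: the parity-controlled extraction of an eigenfunction of $L_0$ from $\Ker M$. The $L_0$-invariance of $\Ker M$ is immediate, but passing from ``finite-dimensional invariant subspace'' to ``eigenvector with a scalar (possibly odd) eigenvalue'' in the supergeometric setting is the point that must be handled with care, consistent with the formal viewpoint adopted in the excerpt. Once this is in place, the induction runs essentially mechanically, driven by the Lemma, the B\'ezout-type divisibility, and the absence of zero divisors among nondegenerate operators.
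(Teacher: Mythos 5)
Your proof follows essentially the same route as the paper's: $L_0$-invariance of $\Ker M$, extraction of an even eigenfunction $\f_1$ from $\Ker M$, the factorization $M=M'M_{\f_1}$, cancellation of the non-zero-divisor $M_{\f_1}$ to get $M'L_1=L_rM'$, and induction on $r$. The one step you flag as delicate---producing an even eigenfunction with a scalar eigenvalue of the correct parity---is exactly the step the paper takes for granted (``take an invertible eigenfunction of $L_0$ in $\Ker M$''), consistent with its declared formal viewpoint, so your extra care there is a refinement rather than a divergence.
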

\begin{proof}
 Suppose $ML_0=L_rM$ for an operator $M$ of order $r$. It can be decomposed into first-order factors, $M=M_1\cdot\ldots\cdot M_r$, where $M_i=M_{\f_i}$ for some functions $\f_i$. Note that this does not suffice \emph{per se}, because we would also need to find the intermediate operators $L_1, \ldots, L_{r-1}$ related by the corresponding  Darboux transformations. We proceed by induction. Suppose $r>1$. From the intertwining relation, $L_0(\Ker M)\subset \Ker M$. Recall $\dim \Ker M=s|s \ \text{or} \ s+1|s$, if $r=2s \ \text{or} \  r=2s+1$. Take an invertible eigenfunction $\f$ of $L_0$ in $\Ker M$. Then $M=M'\Mf$, where $M'$ is a nondegenerate operator of order $r-1$, and $\Mf$ defines a Darboux transformation $L_0\to L_1$. We shall show that $M'$ defines a Darboux transformation $L_1\to L_r$. Indeed, from $M'\Mf L_0=L_rM'\Mf$ and  $\Mf L_0=L_1\Mf$, we obtain $M'L_1\Mf=L_rM'\Mf$. This implies $M'L_1=L_rM'$, as $\Mf$ is a non-zero-divisor. Thus $L_0\xto{M}L_r$ is factorized into $L_0\xto{\Mf}L_1\xto{M'}L_r$ where $\ord M'=r-1$, and this completes the inductive step.
\end{proof}

An analog of Theorem~\ref{main} with a similar proof holds   for the classical case of  operators on the   line.  Elementary transformations in that case are the Levy   transformations. We could not find this statement in the literature, though it should be known to experts. Note that in the fundamental monograph~\cite{matveev-salle:book1991}, Darboux transformations are  defined as  iterations of Levy transformations. Approach based on intertwining relations (which can be traced back to Darboux~\cite{darboux:lecons1889-2})  was used in~\cite{veselov-shabat:dress1993}, \cite{bagrov-samsonov:factorization1995}, \cite{samsonov:factorization1999}, where factorization of Darboux transformations  for the Sturm--Liouville   operator   was proved.

For 2D operators,   the more general intertwining relation $NL_0=L_1M$ has to be used. The  theory in 2D is richer because of different types of Darboux transformations (such as Laplace transformations besides Wronskian transformations). 
Factorization of Darboux transformations for 2D Schr{\"o}dinger operator conjectured by Darboux was established in~\cite{shemy:complete,shemy:fact};  new  invertible transformations were found in~\cite{shemy:invert}. We hope to study   elsewhere the  attractive possibilities that may open in the super version.



\end{document}